\newcommand{\braket}[2]{\ensuremath{\left\langle#1\mid#2\right\rangle}}
\newcommand{\bhat}[1]{\boldsymbol{\textbf{#1}}}
\newtheorem{theorem}{Theorem}
\newtheorem{example}{Example}
\begin{document}

\title{A Quaternionic Map Causing Bipartite Entanglement}

\author[1]{Lidia Obojska}
\affiliation[1]{Department of Quantitative Methods and Information Technology, Kozminski University, Jagiellonska 57/59 St., 03-301 Warsaw, Poland;}
\affiliation[1]{  Department of Mathematics, Siedlce University of Natural Sciences and Humanities, 3 Maja 54 St., 08-110 Siedlce, Poland}
\orcid{0000-0001-6230-7305}
\email{lobojska@kozminski.edu.pl}

\maketitle
\begin{abstract}
 In the following manuscript we propose a quaternionic map which transforms a single quantum state into a bipartite entanglement. Until now such a transformation has not been defined yet. To define such a map, we embed one particle state within the algebra of complex quaternions. Next, on the basis of rotational features of quaternions, we choose a special quaternion which in combination with a quaternion describing a particle state, causes its splitting in a specified direction in 4D space. Finally, we prove that the proposed operation, under several restrictions, always causes bipartite entanglement.
\end{abstract}

\section{Introduction}\label{sec1}

Bipartite entanglement, i.e., the lack of separability,  is a fundamental concept in quantum information theory.  By inseparability we mean a quantum state that cannot be decomposed onto a tensor product of its local constituents; i.e., we do not have individual particles but an inseparable whole. This phenomenon results from an interaction between particles which Einstein defined: ``a spooky action at a distance''. Perhaps, it is the best known quantum phenomenon; however, until now its mathematical description does not exist \cite{EPR1935,Schrodinger1935,Bell1964,4Hor2007,Chenetall2008,Bell2011, Hall2013, Susskind2014}. In 1964 John S. Bell proved that one of the key assumptions in quantum mechanics (QM), the principle of locality, was mathematically inconsistent with the predictions of quantum theory \cite{Bell1964}. He demonstrated an upper limit, regarding the strength of correlations that can be produced in any theory obeying local realism, and showed that quantum theory predicts violations of this limit for certain entangled systems. The first experimental proof was due to Carl Kocher, who  presented an apparatus in which two photons successively emitted from a calcium atom were shown to be entangled; therefore, we know how to create entanglement by experiment \cite{Ko1967, LKS2015}, but we do not have its mathematical description. 

\bigskip In QM we apply the Dirac notation; however, the rotational properties of particles, are not so evident as in the case when quaternion description is used. The following work is an attempt to define a transformation in terms of complex quaternions, which causes bipartite entanglement. We limit ourselves only to bipartite entanglement, since the algebra of octonions is not associative, what causes some inconveniences for descriptions of composite quantum states. However, a mathematical description of two entangled particles can shed new light for more complex systems.

\bigskip
The presented approach is based on the fact that a subatomic particle decays into an entangled pair of other particles considered as indiscernible units; therefore we do not take two particles and correlate them in a specific way, but we take one quantum particle and split it creating an entangled state. This fact causes problems in classical mathematics represented by the set of ZFC axioms since indiscernible objects do not exist within this framework. In \cite{LO2019} we tried to fill this gap in the foundations of mathematics and proposed a modified mereology as a mathematical background which can serve in foundations of quantum mechanics. This theory is based on a relation of division, which is preordering. We showed how to define an order relation, and how to shift to classical ZFC set theory by the use of a special unitary operator, which transforms collective sets into distributive ones. This theoretical background is necessary if we want to define a splitting transformation in term of  complex quaternions.

\bigskip 
Complex quaternions are also known as biquaternions. They are widely applied in quantum mechanics in terms of the Pauli matrices \cite{SA1986,Dixon2013, Ward1997}. They are often used in quantum physics, but their rotational properties are rarely taken into consideration. It is known that biquaternions represent spinors, i.e., vectors with specified rotational properties; hence, their use for description of bipartite entanglement seems perfect \cite{Alexeyeva2007, Alexeyeva2012,Rodrigues1997}. In Section \ref{Sec:Quat} we present a brief description of real and complex quaternions. In particular, we focus on their rotational features. We show how to decompose rotations in two orthogonal two-dimensional planes. Next, we discuss various combinations of the product of quaternions, and give their geometrical and physical interpretations. In Section \ref{Sec:QE} we define a new quaternionic transformation, which causes bipartite entanglement. The proposed map is based on topological and rotational properties of quaternions. We prove that any one-particle quantum state can be transformed into an entangled state by the use of the proposed operation. 
%Our proposal also follows from the fact that since each quantum state is in a superposition of all possible states; hence, it is entangled, too. 

\section{Quaternions and Complexified Quaternions}\label{Sec:Quat}
\subsection{Real Quaternions}
Real quaternions, simply called: quaternions, are generalizations of complex numbers. They were invented by W.R. Hamilton in 1883 to represent rotations in $\mathbb{R}^3$. The algebra of  quaternions--$\mathbb{H}$ is isomorphic to $\mathbb{R}^4$ and it is a non-commutative division algebra, i.e., any non-zero element in $\mathbb{H}$ has its inverse. Any element of $\mathbb{H}$ can be described in the following way:

\begin{equation}
q=a +b \bhat{i} + c \bhat{j} +d \bhat{k},
\end{equation}

\bigskip

\noindent where: $a,b,c,d \in \mathbb{R}$; $\bhat{i}, \bhat{j}, \bhat{k}$ are anti-commuting operators, i.e., $ \bhat{i} \bhat{j} = - \bhat{j} \bhat{i}$, $ \bhat{j} \bhat{k} =   - \bhat{k} \bhat{j}$, $ \bhat{k} \bhat{i} =  - \bhat{i} \bhat{k}$.The anti-community leads to the following identities:  $\bhat{i}^2=\bhat{j}^2=\bhat{k}^2=\bhat{i}\bhat{j}\bhat{k}=-1$.   

\bigskip Let $ p,  q \in \mathbb{H}$ then each quaternion can be written as a sum of a scalar part and a vector part, i.e.: $ q=S(q)+V(q)$: $S(q) \in \mathbb{R}$, $V(q) \in \mathbb{R}^3$.  The operations of addition and multiplication on quaternions $p,q \in \mathbb{H}$ are defined as follows:

\begin{equation}
p+ q=S(p)+S(q) +V(p)+V(q),
\end{equation}
\begin{equation}
pq=S(p)S(q) - \braket{V(p)}{V(q)}+S(p)V(q)+S(q)V(p)+V(p) \times V(q).
\end{equation}

where: $\mid$ is the inner product, and $\times$--the cross product\footnote{$\braket{V(p)}{V(q)} = \dfrac{1}{2}[V(p)V(q)+V(q)V(p)]=\braket{V(q)}{V(p)}$.}. 

\bigskip 
We can also define an operation of conjugation, by analogy to complex numbers:

\begin{equation}
\overline{ q}=S(q) - V(q);
\end{equation}

The norm of a quaternion and its inverse are defined in the following way:

\begin{equation}
N_q={S(q)}^2+ \braket{V(q)}{V(q)}= q  \overline{ q} =  \overline{ q}   q =S(q \overline{q}),
\end{equation}

We have:

\begin{equation}
N_{pq}=N_p N_q, 
\end{equation}
\begin{equation}
S(pq)=S(qp)
\end{equation}

and 
\bigskip
\begin{equation}
q^{-1}=\dfrac{\overline { q}}{N_q},
\end{equation}
\begin{equation}
({ p   q})^{-1} =  q^{-1}   p^{-1}.
\end{equation}

\bigskip
The angle of rotation--$ \theta $ associated with a quaternion $ q$ is the following:

\begin{equation}
\cos \theta = \dfrac{S(q)}{\sqrt{N_q}}, \; 
\end{equation}
\begin{equation}
\sin \theta = \dfrac{\sqrt{\braket{V(q)}{V(q)}}}{\sqrt{N_q}}.
\end{equation}
\bigskip
If $N_q=1$ then $ q$ is called a unit quaternion. Any $ q$ can be written in a polar form, by analogy to complex numbers as follows:

\begin{equation}\label{polarformrealq}
q =\sqrt{N_q}(\cos \theta + \hat{q} \sin
\theta),
\end{equation}

\bigskip 

where $\hat{q} \in \mathbb{R}^3$ is a unit vector, being the axis of rotation, and $\hat{q}^2=-1$. As a result, the plane $ q=s+v \hat{q}, \; s,v \in \mathbb{R}$, for fixed $\hat{q}$, is isomorphic to the complex plane: $w=s+vi$ and within this plane the quaternion multiplication can be reduced to complex multiplication   \cite{Ward1997}.

\bigskip
The inner product on quaternions is defined in the following way:

\begin{equation}
\braket{ p}{ q}=S(p \overline{q})
\end{equation}

As a result, the angle $\lambda$ between two quaternions $ p,  q$ can be figured out as follows:

\begin{equation}
\cos \lambda = \dfrac{S(p \overline{q})}{\sqrt{N_p} \sqrt{N_q}}.
\end{equation}

\bigskip
If $S(p \overline{q})=0$  then $ p \bot  q$; if $V(p \overline{q})=0$  then $ p \parallel  q$. 

If $S(q)=0$ then $ q$ is called a pure quaternion.

\bigskip

\subsection{Complexified Quaternions}
Complexified quaternions--$\mathbb{B}$, are defined as a tensor product of complex numbers and real quaternions, i.e., $\mathbb{B}=\mathbb{C \otimes H}$.  

\bigskip
Any biquaternion can be written as follows:

\begin{equation}
p= q_1+i  q_2=c_0\bhat{1} +c_1i \bhat{i} + c_2 i \bhat{j} +c_3 i \bhat{k}: \;  q_1,  q_2 \in \mathbb{H}; c_0, c_1,c_2, c_3 \in \mathbb{C}.
\end{equation}

or

\begin{equation}
p=\alpha \bhat{1}+ i \underline{\beta}: \; \alpha \in \mathbb{C}, \; \underline{\beta} \in \mathbb{C} ^3.
\end{equation}

\bigskip

This means that we change the basis from $\{1, \bhat{i}, \bhat{j}, \bhat{k}\}$ to $\{ 1, i \bhat{i}, i \bhat{j}, i \bhat{k}  \}$. In this way the biquaternions $i \bhat{i}$, $i \bhat{j}$, $i \bhat{k}$, viewed in $M_2(\mathbb{C}$) representation become the Pauli matrices: $i \bhat{i} =  \sigma_x$, $i \bhat{j} =  \sigma_y$, $i \bhat{k} =  \sigma_z$. 

\bigskip
$\mathbb{B}$ is a non-commutative and non-division algebra over $\mathbb{C}$ with 4 dimensions or, equivalently, it is an algebra over $\mathbb{R}$ with 8 dimensions  \cite{Herrero2017, PW2017, SMLGB1977, Ward1997,Hestenes1971}.

\bigskip
In $\mathbb{B}$ we have three different conjugations: complex $(\ast)$, quaternion $(-)$ and Hermitian $(\dag)$. Under complex conjugation $i \mapsto -i, \bhat{i} \mapsto \bhat{i}$, $\bhat{j} \mapsto \bhat{j}$, $\bhat{k} \mapsto \bhat{k}$;  under quaternionic conjugation  $i \mapsto i, \bhat{i} \mapsto - \bhat{i}$, $\bhat{j} \mapsto - \bhat{j}$, $\bhat{k} \mapsto - \bhat{k}$ and under Hermitian conjugation: $ i \mapsto -i$, $\bhat{i} \mapsto - \bhat{i}$, $\bhat{j} \mapsto - \bhat{j}$, $\bhat{k} \mapsto - \bhat{k}$.

\bigskip
It can be proved that:
$(\overline{ q})^*=\overline{ q^*}$, $(qp)^*=q^*p^*$, $\overline{ qp}=\overline{ p} \overline{ q}$.

\bigskip
If a complexified quaternion commutes with each other then it is a complex number.

\bigskip
Since we have three different conjugations in $\mathbb{B}$, we can define the inner product in different ways:

\begin{equation}\label{innher}
\braket{p}{q}=S(pq^\dag),
\end{equation}

or

\begin{equation}\label{inncong}
\braket{p}{q}=S(p \overline{q} ).
\end{equation}

The norm equals to:
\begin{equation}
N_p=\braket{p}{p}=\dfrac{1}{2}[pp^\dag + p^\ast \overline{ p}];
\end{equation}

however, for (\ref{innher}), which is applied in quantum mechanics, usually, $N_{pq} \ne N_p N_q$:

\begin{equation}
N_{pq}=
\dfrac{1}{2}[pq^\dag + q^\ast \overline{ p}]=2N_qN_p - N_{pq^\ast}.
\end{equation}

\bigskip

If $q=\pm q^\ast$ then $N_{pq}=N_pN_q$ and $p^{-1}=\dfrac{p^\dag}{N_p}$ \cite{Ward1997}.

\bigskip
As real quaternions, any complexified quaternion can be written in a polar form:

\begin{equation}\label{polarformcomplexq}
q =\sqrt{N_q}(\cos z + \hat{q} \sin
z), \; N_{\hat{q}}=1, \; \hat{q}^2 =-1, \; z \in \mathbb{C}.
\end{equation}
\bigskip
If $\overline{q}=q^\ast$ then $z = i \dfrac{\theta}{2}$ and 

\begin{equation}
\cos z =\cosh \dfrac{\theta}{2}=\dfrac{\alpha}{\sqrt{N_q}},
\end{equation}
\begin{equation}
\sin z = i \sinh \dfrac{\theta}{2}=\dfrac{\sqrt{\braket{\underline{\beta}}{\underline{\beta}}}}{\sqrt{N_q}}.
\end{equation}

\bigskip
\bigskip
\subsection{Rotations in terms of Quaternions}

Let $ x, q \in \mathbb{H}$ and $N_q=1$. Since $\mathbb{H}$ is non-commutative, we will consider two maps: $\phi _L(x)= q    x $ and $\phi _R(x)= x    q $;  $\phi_L , \phi_R : \; \mathbb{H} \rightarrow \mathbb{H}$. For $ q=\cos \theta + \hat{q} \sin \theta$  the angle $\omega$ between $ x$ and $ q    x$ is the following:

\begin{equation}
\cos \omega = \dfrac{S(x \overline{(qx)})}{\sqrt{N_x} \sqrt{ N_{qx}}}=\dfrac{S(x)S(\overline{x} \;  \overline{q})}{N_x}=S(\overline{q})=S(q)= \cos \theta ;
\end{equation}

\bigskip therefore, the angle of rotation $\omega$ is the angle of $ q$. It is possible to break up a rotation in $\mathbb{R}^4$ into two  simultaneous rotations in orthogonal planes. We can verify that:

\begin{equation} 
qx= x \cos \theta + (\hat{q} x) \sin \theta, 
\end{equation}
\begin{equation}
q(\hat{q} x)=\hat{q} x \cos \theta -  x \sin \theta.
\end{equation}

\bigskip
For $ x^{'}= \hat{ q} x$ we will have:

\begin{equation}
S(x \overline{\hat{q}x})=S(-x \overline{x} \hat{q})=- x \overline{ x} S(\hat{q})=- x \overline{ x}0=0;
\end{equation}

\bigskip
hence, $ x \bot  x^{'}$. This means that $\phi_L$ is a counterclockwise rotation of elements in the plane containing $ x,   x^{'}$ through the angle $\theta$, i.e., the plane $a  x +b  x^{'}$, $a,b \in \mathbb{R}$ remains invariant under $\phi_L$. In particular, the plane $a 1 + b \hat{q}$ is invariant under left multiplication. 

\bigskip Now, let $\hat{v}, \hat{w}, \hat{q}$ be pure quaternions, which form a right-handed, mutually orthogonal system. Then, for $ x=\hat{v}$, $ x^{'}=\hat{q}, \; \hat{q} \hat{v}= \hat{w}$ we will have:

\begin{equation}
q \hat{v} = \hat{v} \cos \theta +  \hat{q} \hat{v} \sin \theta = \hat{v} \cos \theta + \hat{w} \sin \theta, 
\end{equation}
\begin{equation}
q \hat{w} = \hat{w} \cos \theta +  \hat{q} \hat{w} \sin \theta =- \hat{v} \sin \theta + \hat{w} \cos \theta ;
\end{equation}

\bigskip 

therefore, elements in the plane containing $\hat{v}, \hat{w}$ have been counterclockwise rotated through the angle $\theta$.

\bigskip
Similarly, for $\phi_R$, we will get  a counterclockwise rotation through the angle $\theta$ in the plane containing $1, \hat{q}$  and a clockwise rotation through the angle $\theta$ in the plane containing $\hat{v}, \hat{w}$. 

\bigskip In summary, any rotation in $\mathbb{R}^4$ can be decomposed into two rotations: a rotation of elements in the plane containing a scalar axis and the vector $\hat{q}$ and a rotation of elements in the plane spanned by pure quaternions perpendicular to $\hat{q}$.  

\bigskip
Now, if we choose a special combination of the left and right multiplication, we can define a rotation in $\mathbb{R}^3$. In fact, it was shown that the map $\phi: \; \mathbb{H} \rightarrow \mathbb{H}$ for $q,x \in \mathbb{H}$, $N_q=1$ defined as follows:

\begin{equation}\label{rotinH}
\phi (x)=qxq^{-1}
\end{equation}

\bigskip

leaves the elements in the plane $a 1 + b \hat{q}$ untouched, while the elements in the plane $a \hat{v} + b \hat{w}$ are counterclockwise rotated through the angle $2 \theta$. This results from the fact that multiplication on the right by $ q^{-1}$ causes a clockwise rotation in the plane $a 1 + b \hat{q}$ and a counterclockwise rotation in the plane $a \hat{v} + b \hat{w}$. In conclusion, $\phi$ describes a rotation of a vector part of $ x$ about the vector part of $ q$, i.e., about $\hat{q}$ being the axis of rotation, through the angle $2 \theta$.

\bigskip

In the algebra of biquaternions, rotations depend on the inner product and of combinations of $q$, its conjugates and $w$, e.g., when the inner product is defined as in (\ref{innher}) then the map:

\begin{equation}
\psi(w)=qw \overline{q}: \; q^\ast = q, \; N_q =1
\end{equation} 

\bigskip 

is equivalent to (\ref{rotinH}), i.e., it rotates the vector part of $w$: $V(w)$ about the axis $V_q$ through twice the angle of $q$. 

\bigskip
If we take the following transformation:
\begin{equation}
\phi (x)= q^\dag x q, \; N_q=1 
\end{equation}

\bigskip then $\phi$ describes the Lorentz Transformation of a single space-time event represented by a complexified quaternion $x$.

%\bigskip
%If we take the inner product as defined in  (\ref{inncong}), it can be shown that $\braket{p}{p} \in \mathbb{C}$; hence, it is not positive definite, but $N_{pq}=N_p N_q$. 

\bigskip
Analogously to rotations in $\mathbb{H}$, for $x,q \in \mathbb{B}, \; N_q=1$ elements in the plane $(x, \hat{q}x)$ are rotated through the \underline{complex} angle $z$. If $\hat{v}, \hat{w}, \hat{q}$ form a right-handed system in $\mathbb{R}^3$,  $x=\hat{v}, \; x^{'}=\hat{w}$, then the multiplication on the left by a complexified quaternion $q$ rotates elements in the plane containing $(\hat{v}, \hat{w})$ and in the plane containing $(1, \hat{q})$ through the complex angle $z$. Multiplication on the right by $q$ rotates elements in the plane containing $(\hat{v}, \hat{w})$ through the complex angle $-z$ and in the plane containing $(1, \hat{q})$ through the complex angle $z$. As a result the transformation:

\begin{equation}
\mu (x)= \overline{ q} x q, \; N_q=1
\end{equation}

\bigskip 
is a rotation of $V(x)$ through the complex angle $2z$ about $V(q)$ \cite{Ward1997}.

\bigskip
\bigskip

\section{Quaternionic Entanglement}\label{Sec:QE}

In quantum mechanics each quantum particle is in a state being a superposition of all possible states; therefore, it is entangled, too. In the case of one particle, if it splits and become a bipartite system, it becomes an element of $\mathbb{C}^4$; hence, an element of $\mathbb{B}$. As a result, the algebra of  biquaternions is a natural environment to describe bipartite systems. The space adopted in QM is contained in $\mathbb{B}$ since all vectors representing quantum states are normalized. 

In QM the state of a one particle: $q$ is described by the wave function $ \mid \psi  \rangle$ in the following way: 

\begin{equation}
 \mid \psi \rangle = \alpha \mid 0 \rangle  + \mid \beta \rangle: \; \alpha , \beta \in \mathbb{C}; \;  \mid\alpha \mid ^2+ \mid \beta \mid ^2=1.
\end{equation}

\bigskip 
Now, we embed a state $q$ in $\mathbb{B}$ to have a representation of a bipartite state; therefore, we will exam four possible representations of $q$:

\begin{equation}\label{q1}
q=(\alpha, \beta,0,0),
\end{equation}

\begin{equation}\label{q2}
q=(0,0,\alpha, \beta),
\end{equation}

\begin{equation}\label{q3}
q=(\alpha, 0,\beta,0),
\end{equation}

\begin{equation}\label{q4}
q=(0,\alpha,0, \beta).
\end{equation}
\bigskip

We exclude the states $(0,\alpha,\beta,0), \; (\alpha,0,0,\beta)$ since they are already entangled. To measure entanglement, we will apply the concept of concurrency--C \cite{Wootters2001,Fano1957}\footnote{$q=(q_1,q_2, q_3, q_4)$ is entangled iff $C=2 \mid q_1 q_4 - q_2q_3 \mid \ne 0$. If $C=1$ then $q$ is maximally entangled.}.We also do not consider pure states in $\mathbb{B}$ because they can be continuously transformed into mixed states in the following way:

\bigskip 

$(\alpha,0,0,0)(0,0,a_3,a_4)= (0,0, \alpha a_3, \alpha a_4)$,

$(0,\alpha,0,0)(0,0,a_3,a_4)= (0,0,-\alpha a_4,\alpha a_3)$, 

$(0,0,\beta,0)(0,0,a_3,a_4)= (-\beta a_3,\beta a_4,0,0)$, 

$(0,0,0,\beta)(0,0,a_3,a_4)= (- \beta a_4,- \beta a_3,0,0)$.

\bigskip 

The above assumption is in the case when we deal with pure states in $\mathbb{C}^2$.

\bigskip  Now, let $p$ be a quaternion of rotation, necessary for causing entanglement. We propose the map $\Lambda : \; \mathbb{B} \rightarrow \mathbb{B}$ as an operator of entanglement defined as follows:

\begin{equation}
\Lambda (q) = pqp,
\end{equation} 

\bigskip 
\noindent where $N_p=1, \; p=(a_1,a_2,a_3,a_4), \; a_i \in \mathbb{R}, \; i=1,...,4$. 

\bigskip The combination of $pq$ is a counterclockwise rotation of $q$ through the angle $\theta$ in the plane spanned by a scalar and a vector $1, \hat{ q}$, and in the plane spanned by orthogonal vectors $\hat{v}, \hat{w}$. The  product $qp$ is a counterclockwise rotation of $q$ through the angle $\theta$ in the plane spanned by $1, \hat{q}$, and a clockwise rotation through the angle $\theta$ in the plane $\hat{v}, \hat{w}$. The proposed combination of rotations was chosen in order to break one common dimension between $p$ and $q$. We cannot indicate  within  which plane it happens because it depends on the combination of the analyzed  quaternions. Finally, in the above definition of $\Lambda$ the symmetry property has been taken into account. In fact, when an experiment is made, we only act on a particle which is to be split.

\bigskip 
We mentioned above that we want to break one common dimension between $p$ and $q$; hence, we have to put some restrictions on  $p$ and $q$:
\begin{enumerate}
\item [(R1)]$p$ is not entangled. This restriction seems obvious since we want to create entanglement.
\item [(R2)] $p$ is not a pure state. In such a case the map $\:\Lambda (q)$ would act either as a reflection in four dimensional space or as identity, e.g. $q=(\alpha,\beta,0,0), \; p=(0,1,0,0)$ $\Lambda(q)=-q$; $q=(\alpha,\beta,0,0), \; p=(1,0,0,0)$ $\Lambda(q)=q$.
\item [(R3)] $p$ is connected to $q$ only in one direction. This means that pairs of orthogonal planes for $q$ and $p$ intersect; therefore, if we perform an operation $\Lambda$, we hope to break the direction being in common. 
\end{enumerate}

\begin{example}
$p=(1/\sqrt{2},0,1/\sqrt{2},0)$, $q=(i/\sqrt{2},-i/\sqrt{2},0,0)$,

$\hat{q}=(1,0,0)$, $\hat{v}=(0,1,0)$, $\hat{w}=\hat{q} \hat{v}=(0,0,1)$, 

$\Lambda(q)=(0,-i/\sqrt{2},i/\sqrt{2},0)$--the scalar direction has been broken.
\end{example}

\begin{example}
$p=(0,0,1/\sqrt{2},1/\sqrt{2})$, $q=(i/\sqrt{2},0,-i/\sqrt{2},0)$,

$\hat{q}=(0,1,0)$, $\hat{v}=(0,1/\sqrt{2},1/\sqrt{2})$, $\hat{w}=\hat{q} \hat{v}=(1,0,0)$, 

$\Lambda(q)=(-i/\sqrt{2},0,0,-i/\sqrt{2})$--the $\hat{q}$ direction has been broken.
\end{example}

\begin{example}
$p=(0,0,1/\sqrt{2},1/\sqrt{2})$, $q=(0,i/\sqrt{2},0,i/\sqrt{2})$,

$\hat{q}=(1/\sqrt{2},0,1/\sqrt{2})$, $\hat{v}=(0,1/\sqrt{2},1/\sqrt{2})$, $\hat{w}=\hat{q} \hat{v}=(-1/\sqrt{3},-1/\sqrt{3},1/\sqrt{3})$, 

$\Lambda(q)=(0,i/\sqrt{2},-i/\sqrt{2},0)$

the common direction in the forth coordinate has been broken. 

This direction is determined by both planes, i.e., $(1,\hat{q})$ and $(\hat{v},\hat{w})$.
\end{example}

Two dimensions in common would violate (R1).
%\end{enumerate} 

\bigskip 

Assuming that the coefficients of $p$ taken into consideration are not equal to zero, the quaternions: 

$(a_1,a_2,a_3,0),(a_1,a_2,0,a_4)$, 

$(a_1,0, a_3,a_4), (0,a_2,a_3,a_4)$,

$(a_1,0,0,a_4), (0,a_2,a_3,0)$ 

are entangled; therefore, they do not fulfill (R1); 

the quaternion: $(a_1,a_2,a_3,a_4)$ does not fulfill (R3). 

\bigskip As a result, the following statement follows:

\bigskip 

\begin{theorem}The map $\Lambda$
operates bipartite entanglement in $\mathbb{B}$  under (R1)-(R3).
\end{theorem}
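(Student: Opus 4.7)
The plan is to prove the theorem by exhaustive case analysis, since R1--R3 together with the four admissible forms (\ref{q1})--(\ref{q4}) of $q$ reduce the problem to a finite list of configurations. First I would translate R1--R3 into numerical conditions on $p = (a_1, a_2, a_3, a_4)$: R1 reads $a_1 a_4 - a_2 a_3 = 0$, R2 requires at least two nonzero entries, and R3 requires the nonzero-coordinate supports of $p$ and $q$ to meet in exactly one index. As argued in the paragraph before the theorem, a non-entangled $p$ with at least two nonzero real components must have its support in one of the patterns $(1,2)$, $(1,3)$, $(3,4)$, or $(2,4)$; three-component supports are always entangled, and four-component supports share at least two indices with each form of $q$. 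Matching these patterns against (\ref{q1})--(\ref{q4}) leaves exactly two admissible positional patterns of $p$ for each form of $q$, giving eight cases in total.

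Next I would compute $\Lambda(q) = pqp$ in each case using the quaternion multiplication rules on the basis $\{1, \hat{i}, \hat{j}, \hat{k}\}$. In the representative case $q = \alpha + \beta\hat{i}$ and $p = a_1 + a_3\hat{j}$, direct expansion using $\hat{j}\hat{i} = -\hat{k}$, $\hat{i}\hat{j} = \hat{k}$, $\hat{j}^2 = -1$, and $\hat{k}\hat{j} = -\hat{i}$ yields
\begin{equation*}
\Lambda(q) = \bigl(\alpha(a_1^2 - a_3^2),\; \beta(a_1^2 + a_3^2),\; 2 a_1 a_3 \alpha,\; 0\bigr),
\end{equation*}
and applying $N_p = a_1^2 + a_3^2 = 1$ leaves concurrence $C = 4|a_1 a_3 \alpha \beta|$, which is nonzero since $a_1, a_3 \neq 0$ by assumption and $\alpha, \beta \neq 0$ because $q$ is a genuine one-particle superposition. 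The three worked Examples already cover three further cases of this template.

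The remaining cases follow the same recipe. A shortcut is that the four forms of $q$ are interchanged by left or right multiplication by fixed basis units $\hat{i}, \hat{j}, \hat{k}$, and the four admissible positional patterns of $p$ are permuted analogously, so the eight multiplications reduce essentially to one or two canonical calculations and the rest follow by relabelling. In every case the outcome has the same structural shape: a coordinate of $q$ that was originally zero becomes a nonzero multiple of $a_i a_j \alpha$ or $a_i a_j \beta$, while one of the originally nonzero coordinates survives with coefficient $a_i^2 + a_j^2 = 1$, and these new nonzero entries sit in diagonally opposite slots of the concurrence formula $2|q_1 q_4 - q_2 q_3|$, contributing a nonzero cross term. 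The main obstacle is therefore purely bookkeeping: one must verify in each of the eight cases that no cross terms of $q_1 q_4 - q_2 q_3$ cancel accidentally, which reduces uniformly to the nonvanishing of $a_i a_j \alpha \beta$, exactly what R2 together with the nontriviality of the single-particle state $q$ ensures.
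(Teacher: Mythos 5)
Your proposal is correct and follows essentially the same route as the paper: an exhaustive enumeration of the eight admissible $(q,p)$ configurations permitted by (R1)--(R3), followed by direct computation of $pqp$ and of the concurrence, which in every case equals $4\,\lvert \alpha\beta\, a_i a_j\rvert \neq 0$; your worked representative case reproduces the paper's first case exactly. The only difference is presentational --- the paper writes out all eight products explicitly rather than appealing to a relabelling symmetry --- and you make explicit the tacit assumptions $\alpha,\beta\neq 0$ and $a_i,a_j\neq 0$ that the paper also relies on.
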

\begin{proof} The proof consists of simple calculations. We have two possible choices of $p$ for each $q$:
\begin{enumerate}
\item   $q=(\alpha, \beta,0,0)$, $p=(a_1,0,a_3,0)$,

$pqp=(\alpha(a_1^2-a_3^2), \beta, 2 \alpha a_1 a_3, 0) $, 
$C=4\mid \alpha \beta a_1 a_3 \mid \ne 0$.

\item   $q=(\alpha, \beta,0,0)$, $p=(0,a_2,0,a_4,0)$,

$pqp=(-\alpha, -\beta (a_2^2-a_4^2),0, -2 \beta a_2 a_2, 0) $,	 		
$C=4\mid  \alpha \beta a_2 a_4 \mid \ne 0$.

\item   $q=(0;0;\alpha, \beta)$,  $p=(a_1,0,a_3,0)$,

$pqp=(-2\alpha a_1 a_3;0; \alpha (a_1^2-a_3^2); \beta) $,	
$C=4 \mid  \alpha \beta a_1 a_3 \mid \ne 0$.

\item    $q=(0;0;\alpha, \beta)$, $p=(0,a_2,0,a_4)$,

$pqp=(0,-2\beta a_2 a_4, \alpha,- \beta (a_4^2-a_2^2)) $,		 
$C=4 \mid \alpha \beta a_2 a_4 \mid \ne 0$. 

\item    $q=(\alpha,0, \beta,0)$, $p=(a_1,a_2,0,0)$,

$pqp=(\alpha a_1^2-\alpha a_2^2,2\alpha a_1 a_2,  \beta,0) $,		 
$C=4 \mid \alpha \beta a_1 a_2 \mid \ne 0$. 

\item    $q=(\alpha,0, \beta,0)$, $p=(0,0,a_3,a_4)$,

$pqp=(-\alpha,0,-\beta a_3^2 + \beta a_4^2,2\beta a_3 a_4) $,		 
$C=4 \mid \alpha \beta a_3 a_4 \mid \ne 0$. 

\item    $q=(0,\alpha,0, \beta)$, $p=(a_1,a_2,0,0)$,

$pqp=(-2\alpha a_1 a_2, \alpha a_1^2-\alpha a_2^2,0,  \beta) $,		 
$C=4 \mid \alpha \beta a_1 a_2 \mid \ne 0$. 

\item   $q=(0,\alpha,0, \beta)$,  $p=(0,0,a_3,a_4)$,

$pqp=(0,\alpha,-2\beta a_3 a_4,\beta a_3^2 - \beta a_4^2) $,		 
$C=4 \mid \alpha \beta a_3 a_4 \mid \ne 0$. 		
\end{enumerate}

\end{proof}

\bigskip

We can observe that we always obtain the same degree of entanglement equal to $4 \mid \alpha \beta a_i a_j \mid$. If $\mid \alpha \beta a_i a_j \mid=1/4$ then we get maximally entangled states. For example, this happens for  $\alpha=\beta=a_1=a_2=a_3=a_4=1/\sqrt{2}$. Such parameters determine quaternions representing rotations through the angle $\pi /4$. It is not surprising here that we obtained exactly the same angle of rotation which was applied between the calcium bean and lens in the experiment of Kocher \cite{Ko1967}.

\bigskip

\section{Discussion}
In the proposed work we defined an entanglement map--$\Lambda$ which, under several restrictions, transforms a quantum one-particle state into a bipartite entanglement. In our definition we relied on rotational properties of quaternions, and on the fact that quantum superposition means that a quantum state is in all possible states at the same time. We are aware that the presented method is totally different than methods applied in QM. Such descriptions use local quantum operations and classical communications--LOCC. In fact, in classical method, e.g., the requirement for bipartite entanglement is that the Schmidt rank of a state is greater than or equal to the Schmidt rank of the unitary operator $U$ \cite{DSRG2011}. In the model proposed in this manuscript, always, the Schmidt rank of a state $\mid \psi \rangle$ is equal to the Schmidt rank of the unitary operator $p$; hence we are in full agreement with other approaches. 

\bigskip
The proposed quaternionic description of bipartite entanglement is important because in all physical experiments which were oriented to make more particles entangled, always the point of departure was a pair of two entangled particles \cite{GHZ1990,Zet1997}. Therefore, the quaternionic mechanism of bipartite entanglement can shed new light for various experiments. We think, that it is not necessary to invent an octonic description for three entangled particles, but it is enough to know the most basic quaternionic entangled relationship and apply the knowledge to the tensor product for more than two particles. Such studies are in progress. We  hope that this proposal can open fascinating perspectives not only for quantum computing, but for the whole quantum mechanics. 

\bigskip 

\section*{Acknowledgments}

This work was performed thanks to the financial support of the Polish Ministry of Arts and Higher Education, no. 493/S/17.

\bigskip
\bigskip
%\section*{References}
\bibliographystyle{plainnat}
%\bibliography{bibliografia13022020}

\bigskip
\bigskip
%\begin{thebibliography}{99}
%\end{thebibliography}

\end{document}